\theoremstyle{plain}
\newtheorem{defi}{Definition}[section]
\newtheorem{thm}[defi]{Theorem}
\theoremstyle{remark}
\newtheoremstyle{mystyle}
  {}
  {}
  {\itshape}
  {}
  {\bfseries}
  {.}
  { }
  {}
\theoremstyle{mystyle}
\newcounter{note}
\DeclareMathAlphabet{\mathpzc}{OT1}{pzc}{m}{it}
\def\reR{\mathbb{R}}
\def\p{\partial}
\def\<{\langle}
\def\>{\rangle}
\def\and{\text{ and }}
\newcommand{\bra}[1]{{\left\langle{#1}\right|}}
\newcommand{\ket}[1]{{\left|{#1}\right\rangle}}
\newcommand{\qw}[1][-1]{\ar @{-} [0,#1]}
\newcommand{\qwx}[1][-1]{\ar @{-} [#1,0]}
\newcommand{\control}{*!<0em,.025em>-=-<.2em>{\bullet}}
\newcommand{\ctrl}[1]{\control \qwx[#1] \qw}
\newcommand{\targ}{*+<.02em,.02em>{\xy ="i","i"-<.39em,0em>;"i"+<.39em,0em> **\dir{-}, "i"-<0em,.39em>;"i"+<0em,.39em> **\dir{-},"i"*\xycircle<.4em>{} \endxy} \qw}
\newcommand{\Qcircuit}{\xymatrix @*=<0em>}
\def\p{\partial}
\def\reR{\mathbb{R}}
\begin{document}

\title{A Data Driven Approach to Learning The Hamiltonian Matrix in Quantum Mechanics}
\author[1]{Jordan Burns \thanks{eamil: jordan.burns@usu.edu}} 
\author[2]{David Maughan \thanks{email: drm@rincon.com}} 
\author[1]{Yih Sung \thanks{email: yih.sung@usu.edu}}
\affil[1]{Department of Mathematics, Utah State University}
\affil[2]{Rincon Research Corporation}

\date{}

\maketitle

\begin{abstract}
We present a new machine learning technique which calculates a real-valued, time independent, finite dimensional Hamiltonian matrix from only experimental data. A novel cost function is given along with a proof that the cost function has the theoretically correct Hamiltonian as a global minimum. We present results based on data simulated on a classical computer and results based on simulations of quantum systems on IBM's ibmqx2 quantum computer. We conclude with a discussion on the limitations of this data driven framework, as well as several possible extensions of this work. We also note that algorithm presented in this article not only serves as an example of using domain knowledge to design a machine learning framework, but also as an example of using domain knowledge to improve the speed of such algorithm. 

\end{abstract}

\section{Introduction}
\emph{"So the problem is: Know your Hamiltonian!" - Richard Feynman}

\smallskip

In quantum mechanics, the Hamiltonian is a mathematical operator which completely determines the evolution of a quantum mechanical system.  If one knows the Hamiltonian governing a system, then any question of what will happen to the system can be answered \cite{book_feynman_lec_3}*{Vol.3~Ch.8}. Traditionally, finding the Hamiltonian involves formidable calculations, as well as deep theoretical knowledge of the system under study \cite{book_feynman_lec_3}*{Vol.3~Ch.9, Ch.12}. However, when the quantum system is too complicated, it is almost impossible to find the explicit expression of the Hamiltonian by hands. In contrast to the classical approach, we use a novel data driven machine learning algorithm to discover the Hamiltonian. Thus, the method we present in this paper involves only a knowledge of the basis states of the system. Towards the end of the paper we also discuss how classical approaches and the data driven approach can be blended to produce more accurate results.

Our goal and method are different from the quantum Hamiltonian Learning (QHL) introduced in \cite{ref_Hamiltonian_learning_14, ref_Hamiltonian_learning_imperfect_14}. QHL is aiming to simulate an existing quantum system by Bayesian updating. In QHL, the probability function  $P(D | H)=|\bra{D}e^{iHt}\ket{\psi}|^2$ is used as a data point to update the probability $P(H | D)$; however, in our approach, our goal is to \emph{build} a quantum system to solve a peculiar question. Driven by data, we recover the parameters $\{w_{ij}\}$ of the Hamiltonian $H$ by optimizing the cost function $C(w_{ij})=1-|\bra{D}e^{iH(w_{ij})t}\ket{\psi}|^2$. Suggested by the nature of the method, we name it Data Driven Approach (DDA).

Quantum computing is usually seen as a way to speed up statistical computation to improve machine learning algorithm. Quantum Machine Learning (QML) is a subfield particularly focusing on performing machine learning algorithm on a quantum computer \cite{ref_Lloyd13:quantumalgorithms,ref_biamonte:quan_machine_learning}. In this paper,  we apply in the opposite direction: we use machine learning techniques to search and find the right Hamiltonian matrix in quantum computing. We have succeeded in recovering the Hamiltonian of Grover's search Algorithm, Ammonia Molecule's quantum system, Hyperfine Splitting of Hydrogen, and random Hamiltonian by DDA.

We structure our article as follows: in Section \ref{sec:methods} we explain the setting and outline the structure of gradient decent approach. In Section \ref{sec:results}, we use several examples to demonstrate the power of the new method. In Section \ref{sec:conclusion}, we draw our conclusion. In Appendix \ref{app:min_H_is_not_uniq}, we show that a minimum Hamiltonian is not unique. It is allowed to have a global phase difference. 


\section{Methods} \label{sec:methods}
\subsection{Data Driven Approach (DDA)}
\subsubsection*{The Schrodinger Equation And Setting}

Given a quantum dynamical system, the evolution of the system is characterized by the Sch\"odinger equation
\begin{equation} \label{eq:quan_evolution_simplified}
    i\hbar \frac{d}{dt} \ket{\psi(t)} = H \ket{\psi(0)},
\end{equation}
where $\psi(t)$ represents the quantum states at time $t$ and $H$ is the Hamiltonian operator (cf. \cite{book_quan_comp&quan_info}*{Ch 2}). When $H$ is independent of time the differential equation has the following solution (at least upto some $i$'s and $\hbar$'s)

\begin{equation} \label{eq:quan_evolution}
    \ket{\psi(t)} = e^{-itH} \ket{\psi(0)},
\end{equation}
where $\ket{\psi(0)}$ is the initial state. By absorbing $t$ and $\hslash$ into the Hamiltonian $H$ and writing
$$
\hat H = i H,
$$
we can treat (\ref{eq:quan_evolution}) as a equation connects initial state and final state
\begin{equation} \label{eq:quan_evolution_simplified}
\ket{\psi_{\text{final}}} = e^{-i\hat H} \ket{\psi_{\text{init}}}.
\end{equation}
Therefore, the task becomes finding an appropriate Hamiltonian $\hat H$ so that the given potential initial states can be mapped to the desired final states. In general, the matrix consisting of $\{\ket{\psi_{init}}\}$ may not be unitary, so (\ref{eq:quan_evolution_simplified}) is \emph{not} a simple inverse matrix problem. Therefore, we need a new approach.

In the case that the quantum system of interest can only exist in superpositions of finitely many basis states (which is the only case we consider in this paper), then the Hamiltonian operator can be expressed as a matrix. In quantum computing, this is exactly the case. With finitely many input states and output sates, the computation can be realized by a quantum circuit. In general a Hamiltonian is a hermitian operator. In order to demonstrate our method transparently, we restrict ourselves to only \emph{real-valued} Hamiltonians, so for this paper the Hamiltonian is a symmetric matrix. Nevertheless, it is very easy to generalize the whole scheme over \emph{complex} numbers.


From a high level point of view, the data driven method is to initialize a random Hamiltonian, which we call the learning Hamiltonian. Using an optimization technique, the entries of the learning Hamiltonian are slowly adjusted until the evolution of the learning Hamiltonian coincides with the data measured in the laboratory. This process is monitored by the means of a cost function. The cost function can be interpreted as the likelihood that the data was produced by the learning Hamiltonian. We now present background information needed to examine the method in further detail.

\subsubsection*{Parameterizing the space of all Hamiltonian matrices}
Consider a quantum system which can exist in arbitrary superpositions of $n$ basis states. The evolution of this system will be governed by an $n$ by $n$ Hamiltonian matrix. Since the Hamiltonian is a symmetric matrix, it requires $n(n+1)/2$ numbers to specify a Hamiltonian matrix. We simply allow the entries $w_{ij}$ of the Hamiltonian $H$ to be the parameters for tuning. Hence, the sample space of all possible Hamiltonians can be characterized as all possible combinations of 
$$
\{w_{ij}\}_{1\le i\le j\le n}.
$$
To align ourselves with usual machine learning language we shall call these parameters weights. For the sake a clarity we present our parameterization of the space of $4\times 4$ Hamiltonian matrices 

\[
\begin{bmatrix}
    w_{11}  & w_{12} & w_{13} & w_{14} \\
    w_{12}  & w_{22} & w_{23} & w_{24} \\
    w_{13}  & w_{23} & w_{33} & w_{34} \\
    w_{14}  & w_{24} & w_{34} & w_{44} \\
\end{bmatrix},
\]
where each $w_{ij}$ is a real number.


\subsubsection*{Choosing a basis}
Since our parameterized Hamiltonian is still arbitrary, it has not presupposed a basis of the Hilbert Space of states. One convenient basis for the states of a quantum system is the standard basis of $R^n$, where we identify each basis element with exactly one measurable state of the system. That is to say, a basis state is a state of the system which collapses to the associated measurable state with probability one. For example, suppose we had a two-state spin system. The two measurable states are spin up and spin down. The spin up state could correlate to $(1,0)^T$ and the spin down state could correlate to $(0,1)^T$. This choice is not unique, but is well-suited to our task.


\subsubsection*{Defining the data}
For this approach, a set of input states and output states are needed to effectively learn the desired Hamiltonian. By preparing a known input state $\ket{\psi_i}$ and measuring the state of the quantum system after a specific time $t$, an input and output pair can be found. Due to the randomness of measured states in quantum mechanics, this process can (and should) be repeated to obtain the expected value for the measured output state. Call this expected value $\ket{\phi_i}$.
In the theoretical limit of an infinite number of repetitions of preparing and measuring data

\begin{equation*}
    \ket{\phi_i} = \ket{\psi_i(t)} = e^{-itH} \ket{\psi(0)}
\end{equation*}
and
\begin{equation*}
    \ket{\psi_i} = \ket{\psi_i(0)}
\end{equation*}

These sets of data points, $\ket{\psi_i}$, $\ket{\phi_i}$, and $t$, provide the foundation of the cost function that is central to this approach.


\subsection{The Cost Function}
As previously mentioned, the cost function is  the probability of the learning Hamiltonian giving the incorrect evolution. It is characterized as a mapping from the parameterized space of Hamiltonians to closed interval of the real line $[0,1]$ defined by
\begin{equation} \label{eq:cost_func}
    C(H_L) = \sum_{i=1}^m 1 - |\bra{\phi_i} e^{-i\hbar H_L t_i} \ket{\psi_i}|^2
\end{equation}
where $H_L=H_L(w_{ij})$ is a learning Hamiltonian in the parameterized space, $m$ is the total number of data pairs, $\psi$ and $\phi$ come from the data pairs, and $t_i$ is the amount of time for which data pair $i$ evolved. Let $C_T$ be the true Hamiltonian that governed the quantum evolution. As $C(H_L)\ge 0$ and $C(C_T)=0$, namely,  its associated cost is zero, $C_T$ is a global minimum.

Since we have parameterized the space of Hamiltonians by the weights $\{w_{ij}\}$, we now have well defined partial derivatives of the cost function
$$
\frac{\p C(H_L)}{\p w_{ij}}
$$
with respect to these parameters $\{w_{ij}\}$ of $H_L$. These partial derivatives allow us for the use of a large family of optimization techniques. 


\subsubsection*{Finding the global minimum}
We have already shown that the true Hamiltonian is a global minimum of the cost function and that the cost is zero at this point. Potential solvers include stochastic gradient descent, simultaneous perturbation stochastic approximation, or another method that is not a hill climbing technique. For simple cases, such as $2\times 2$ Hamiltonians, we can derive a general analytic formula for the partial derivatives to boost the speed of convergence in optimization. For larger systems we implemented a momentum based gradient descent which used numerical finite differences to approximate the partial derivatives when the system is too complicated to obtain an analytic solution.

The formulas for momentum based gradient descent are

\begin{gather}
    V_{t} = \beta V_{t-1,ij} + \alpha \nabla C(W_{t-1}) \\
    W_t = W_{t-1} - V_{t},
\end{gather}
where $\beta$ is a decay parameter between 0 and 1 and $\alpha$ is the learning rate. Note that if $\beta = 0$, we recover the regular gradient descent.

Note that as we are minimizing the cost function and the cost function is invariant by a global phase, there is no unique Hamiltonian in optimizing the cost function. In other words, 
$$
C(H_L) = C(H_L + f\cdot I),
$$
where $f\in\reR$ and $I$ is the identity matrix. We include the proof in Appendix \ref{app:min_H_is_not_uniq}. 


\subsubsection*{Accuracy Estimate}
Recall the maximum matrix norm of a matrix $A$ is defined as 
\begin{equation}
\|A\| := \sup\{ |Ax| \mid x\in \reR^n \,\text{ with }\, \|x\|=1 \}.
\end{equation}
Thus we can measure how accurate our approximation is in uncovering random Hamiltonians by
$$
\|H_{\text{rand}} - H_{\text{learn}}\|,
$$
where $H_{\text{rand}}$ is the random Hamiltonian and $H_{\text{learn}}$ is the Hamiltonian by learning.


\section{Results} \label{sec:results}



\begin{table}
\begin{tabular}{l|l|l|l|l}
\hline
\multicolumn{5}{|c|}{IBM Q Generated Data}  \\ \hline
& \multicolumn{4}{l}{Number of times each basis state was measured after evolution} \\ \hline
Prepared initial State & State 1             & State 2            & State 3            & State 4            \\ \hline
State 1 & 993 & 12 & 8 & 11\\   \hline
State 2 & 34 & 13 & 959 & 18\\  \hline
State 3 & 17 & 982 & 14 & 11\\  \hline
State 4 & 10 & 34 &  47 & 933\\ \hline
Uniform Superposition & 240 & 249 & 255 & 280 \\ 
\hline
\end{tabular}
\caption{\small Finding The Hamiltonian of The Hyperfine Splitting of Hydrogen} \label{tab:hamiltonian_HSH}
\end{table}

\subsection{The Hyperfine Splitting of Hydrogen: A thoroughly worked example}
In this section, we present how DDA discovers a Hamiltonian in a physical quantum system in detail. 

Hyperfine structure is the shifts and splits in the energy levels of atoms and molecules that is caused by interactions of the nucleus (or nuclei, in molecules) with internally generated electromagnetic fields. In the hydrogen atom, the hyperfine splitting comes from the spin states of the proton and electron. The Hamiltonian governing the hyperfine splitting of hydrogen is extensively studied in the Feynman lectures \cite{book_feynman_lec_3}*{Vol.3~Ch.12}, in which he explains how the spherical symmetry of the problem forces the Hamiltonian to have a certain form. Here, we will now show how DDA can find the same Hamiltonian, but without any knowledge besides knowing the basis states of the system. 

Suppose the measurable states of the system are the electron and proton both with spin up, the electron spin up and the proton spin down, the electron spin down and the proton spin up, and both particles spin down. We label these basis vectors as state one, state two, state three, and state four, respectively. We concisely label these states as $\ket{\uparrow\uparrow}, \ket{\uparrow\downarrow}, \ket{\downarrow\uparrow}, \ket{\downarrow\downarrow}$ respectively. We then identify each of the basis states with an element of the standard basis of $R^4$ as follows.
\[
\ket{\uparrow\uparrow} = 
\begin{bmatrix}
    1 \\ 0 \\ 0 \\ 0 \\
\end{bmatrix},
\ket{\uparrow\downarrow} = 
\begin{bmatrix}
    0 \\ 1 \\ 0 \\ 0 \\
\end{bmatrix},
\ket{\downarrow\uparrow} = 
\begin{bmatrix}
    0 \\ 0 \\ 1 \\ 0 \\
\end{bmatrix},
\ket{\downarrow\downarrow} = 
\begin{bmatrix}
    0 \\ 0 \\ 0 \\ 1 \\
\end{bmatrix}
\]

Suppose we have a way to measure in which of these states the atom is. We could then prepare a specific state; allow it to evolve for some amount of time, and then measure which state the system is in. Repeating this analysis many times for the same state and time would allow us to develop the probabilities of the initial state collapsing to different basis states after the evolution. These probabilities are then organized into a vector and become the second element in our data pairs. 

In a sense we can simulate the hyperfine splitting of hydrogen on a quantum computer using only a SWAP gate: 
$$
\Qcircuit @C=1em @R=.7em {
 & \ctrl{1} & \targ & \ctrl{1} & \qw & \\ 
 & \targ & \ctrl{-1} & \targ & \qw & 
}
\quad = \quad
\begin{bmatrix}
1 & 0 & 0 & 0 \\
0 & 0 & 1 & 0 \\
0 & 1 & 0 & 0 \\
0 & 0 & 0 & 1 
\end{bmatrix}.
$$
Table \ref{tab:hamiltonian_HSH} summarizes our data. The data simulates an evolution lasting 0.785 seconds and then being measured. Each of our initial states was prepared, evolved, and then measured 1024 times.

From the table we can build our data pairs. They are the column vectors as follows. 

\begin{gather}
\notag \left(
\begin{bmatrix}
1 \\ 0 \\ 0 \\ 0\\
\end{bmatrix}
,
\begin{bmatrix}
\sqrt{\frac{993}{1024}} \\ \sqrt{\frac{12}{1024}} \\ \sqrt{\frac{8}{1024}} \\ \sqrt{\frac{11}{1024}}\\
\end{bmatrix}
\right)
,
\left(
\begin{bmatrix}
0 \\ 1 \\ 0 \\ 0\\
\end{bmatrix}
, 
\begin{bmatrix}
\sqrt{\frac{34}{1024}} \\ \sqrt{\frac{13}{1024}} \\ \sqrt{\frac{959}{1024}} \\ \sqrt{\frac{18}{1024}}\\
\end{bmatrix}
\right)
,
\left(
\begin{bmatrix}
0 \\ 0 \\ 1 \\ 0\\
\end{bmatrix}
,
\begin{bmatrix}
\sqrt{\frac{17}{1024}} \\ \sqrt{\frac{982}{1024}} \\ \sqrt{\frac{14}{1024}} \\ \sqrt{\frac{11}{1024}}\\
\end{bmatrix}
\right)
, \\
\left(
\begin{bmatrix}
0 \\ 0 \\ 0 \\ 1\\
\end{bmatrix}
,
\begin{bmatrix}
\sqrt{\frac{4}{1024}} \\ \sqrt{\frac{10}{1024}} \\ \sqrt{\frac{34}{1024}} \\ \sqrt{\frac{933}{1024}}\\
\end{bmatrix}
\right)
,
\left(
\begin{bmatrix}
\sqrt{.25} \\ \sqrt{.25} \\ \sqrt{.25} \\ \sqrt{.25}\\
\end{bmatrix}
,
\begin{bmatrix}
\sqrt{\frac{240}{1024}} \\ \sqrt{\frac{249}{1024}} \\ \sqrt{\frac{255}{1024}} \\ \sqrt{\frac{280}{1024}}\\
\end{bmatrix}
\right)
\end{gather}

The algorithm gives this as the final output where the cost is 0.0446286.
\[
H_L = 
\begin{bmatrix}
1.2186 & 0.03709 & -0.03641 & 0.00055 \\
0.03709 & -0.79105 & 1.99915 & -0.00653 \\
-0.03641 & 1.99915 & -0.75919 & 0.00611 \\
0.00055 & -0.006536 & 0.00611 & 1.22186 \\
\end{bmatrix}
\]
This is greatly close to the actual Hamiltonian 
$$
H_{\text{real}} = 
\begin{bmatrix}
1 & 0 & 0 & 0 \\
0 & -1 & 2 & 0 \\
0 & 2 & -1 & 0 \\
0 & 0 & 0 & 1
\end{bmatrix}
$$
considering we can adjust the diagonals according to Theorem \ref{thm:Hamiltonaina_is_not_uniq}.

The true Hamiltonian $H_{\text{real}}$ results in a cost of  0.044891608584168144. This indicates that our method found a Hamiltonian that is very accurate considering the noise present in the system (the data is 5\% noisy).


\subsection{Numerical Experiments}

Successfully studying the evolution of the hyperfine splitting of hydrogen using real quantum data was a great success. To be sure that the method would work on a large variety of systems and that it would scale with size, we also generated random Hamiltonians and studied them with our data driven approach. 


\subsubsection*{The data}



There are uncountably many different ways in which one could prepare input data. One is welcome to prepare any arbitrary superpositions and measure the evolution (and thereby collapse the wavefunction) at any time. If $n$ is the number of states then we decided to use $1 + n + {n \choose 2}$ input  data points. The $1$ data point is the uniform superposition. The $n$ is being completely in one of the $n$ measurable states. The ${n \choose 2}$ is being in an evenly split superposition of two of the $n$ basis states. 

We generated the expected output data using the true Hamiltonian as a blackbox. 


\subsubsection*{Results}

\begin{table}
\centering
\begin{tabular}{ c | c | c | c}
\hline
\multicolumn{4}{|c|}{size of the Hamiltonian} \\
\hline
 & $4\times 4$ & $8\times 8$ & $30\times 30$ \\
\hline
average accuracy & 2.200358254 $\times$ $10^{-5}$ & 6.17835244 $\times$ $10^{-5}$ & 6.84412568 $\times$ $10^{-5}$ \\
\hline
\end{tabular}
\caption{\small Random Hamiltonian}
\end{table}

We generated 100 random Hamiltonians with $4$ quantum states, 100 random Hamiltonians with $8$ quantum states, and 10 random Hamiltonians with $30$ quantum states. Using the max norm of the difference between the random (true) Hamiltonian and the final learning Hamiltonian, we observed that each of the entries in these matrices was learned to at least 4 digits of accuracy. The cost function fell to at least 10e-10 for all matrices as well. In summary, the error fell to machine zero on all of our learning Hamiltonians.

%

This method can also be used online with a stochastic gradient descent method. One could either take a descent step with every run of the experiment, or batch them into small amounts. Numerical tests on the hyperfine splitting Hamiltonian found that the method would only converge if already in a small neighborhood of the correct answer. The issue with this approach is that sometimes a quantum state collapses to one of its low probability states, and this creates a large cost. Thus, it would be better to batch the data to reduce the frequency of collapses to low probability states. 


\subsubsection*{Brief note on efficiency}

Calculating the cost function can be an expensive calculation to perform since it involves a matrix exponential. Matrix exponentials are still an open area of research. Scipy, Octave and Matlab use the Pade approximant to find the matrix exponential. Since we have used a finite difference method, the algorithm makes many calls to the cost function, so a faster way to find the matrix exponential would create a large speed up. A matrix, $A$, is said to be diagonalizable if it can be written as $A = U D U^{-1}$ where $D$ is a diagonal matrix. There is a fast and simple method to calculate the exponential of a diagonal matrix, which is

\begin{equation*}
    e^A = U e^D U^{-1}.
\end{equation*}

In quantum mechanics, it is well known that the Hamiltonian is a hermitian operator. Every hermitian operator is diagonalizable. However, the Sch\"odinger equation requires the exponential of $it$ multiplied by the Hamiltonian matrix, so the question becomes: Is $i t H$ a diagonalizable matrix? The answer is yes. More generally, and complex valued diagonalizable matrix multiplied by a complex number is again diagonalizable. Let $c$ be a complex number and $A$ be a diagonalizable matrix as before. Then $cA = U (cA) U^{-1}$. This also tells that $i t H$ has the same eigenvectors and as $H$ and the eigenvalues are just scaled by $it$.

We implemented this diagonalization matrix exponential technique and compared it to the Scipy matrix exponential.
\begin{figure}
\centering
\includegraphics[scale=0.5]{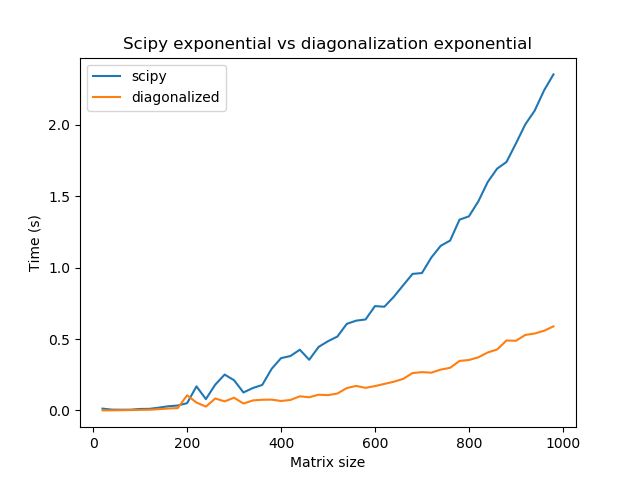}
\caption{\small We observed a large speed up using the diagonalization matrix exponential. Although unseen in this image, on matrices of size 4, we observed approximately a 100 times speed up.}
\end{figure}

Besides the obvious speed up observed, another advantage is that the Pade approximant is not the true matrix exponential. However, the difference we found between the diagonalization matrix exponential and the Pade approximant was machine epsilon using the max norm. 

\subsection{Discussion}

As we discuss before, DDA is a robust algorithm, i.e., it does not need any \emph{\'a priori} knowledge. However, it does not mean one has to ignore more tradition studies of quantum systems. Information known about the system can be incorporated to improve the accuracy of the method and reduce the run time. 

Consider the hyperfine splitting of hydrogen. By the conservation of total spin angular momentum one can argue that the Hamiltonian must be of the form

\[
\begin{bmatrix}
    w_{11}  & 0 & 0 & 0 \\
    0 & w_{22} & w_{23} & 0 \\
    0  & w_{23} & w_{33} & 0 \\
    0  & 0 & 0  & w_{44} \\
\end{bmatrix},
\]
where each $w_{ij}$ is a real number, and using the same basis as described earlier for this problem. 

When this form of the Hamiltonian is assumed, the algorithm tends to find a local minimum in around 9\% faster. More importantly, this information increases the probability of the initialization of the random weights placing the learning Hamiltonian in a convex neighborhood of the true Hamiltonian. Being in a convex neighborhood of the minimum essentially guarantees the success of the algorithm. 

Though the method has demonstrated that it can learn the Hamiltonian matrix quite successfully on a large variety of quantum systems, it does has some limitations. The first of which is that we have currently restricted ourselves to real-valued Hamiltonians, instead of the more general complex valued case. Our comments concerning this are twofold. First, one could define $n(n-1)/2$ additional weights, so that each off diagonal entry of the Hamiltonian has two weights. One weight for the real part of the entry and one weight for the imaginary part of the entry. Second, one could use the method in its current state, if one is content to only have the magnitude of each entry in the Hamiltonian matrix.

Another limitation of the method is that the true Hamiltonian will not necessarily be the global minimum depending on how much noise is in the experimental data. This was demonstrated in the case of the hyperfine splitting. However the minimum were very close. To this limitation, we note that one cannot expect one's results to be better than one's data. 

The final drawback we mention of the DDA is that the method only learns the numerical values. If one changes the experiment a small amount in the setting such as numbers of parameters and dimension of the input data set, one does not know anything about whats new, namely, we do not obtain a general and closed formula of the Hamiltonian matrix. What we can do is to repeatedly execute the routine to obtain the specific Hamiltonian of the specific setting.

One place where we suspect this method will be of great use to researchers is in the study of perturbation problems. One could use the known solution of a problem as the initial weights and allow the DDA to discover the Hamiltonian of the unknown solution of the perturbed problem. 

Concerning all of the limitations of the method, we believe that small changes to the methodology outlined in this paper could allow any of these limitations to be overcome. 


\section{Conclusion} \label{sec:conclusion}

We have presented a new data driven approach to studying quantum mechanical systems, in contrast to analytic methods. We have demonstrated the utility of the method on true quantum data as well as simulated data.  We have also allowed for the blending of analytic methods and this new machine learning approach. We believe this paper will serve as part of a foundation for new machine learning techniques for studying physical science problems.

\appendix 
\section{Minimum Hamiltonian Is Not Unique} \label{app:min_H_is_not_uniq}
\begin{thm} \label{thm:Hamiltonaina_is_not_uniq}
Let $C(H_L) = \sum_{i=1}^m 1 - |\bra{\phi_i} e^{-i\hbar H_L t_i} \ket{\psi_i}|^2$ be the cost function, cf, (\ref{eq:cost_func}). Then,
$$
C(H_L) = C(H_L + f\cdot I),
$$
where $f\in\reR$ and $I$ is the identity matrix.
\end{thm}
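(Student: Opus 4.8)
The plan is to show that replacing $H_L$ by $H_L + f\cdot I$ introduces nothing more than a global phase factor of modulus one into each matrix exponential appearing in the cost function, and that this phase is annihilated by the $|\cdot|^2$ in each summand. Since every term of the sum defining $C$ is individually unchanged, the total cost is unchanged.

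The crucial algebraic input is that $f\cdot I$ commutes with $H_L$. Scalar multiples of the identity commute with every matrix, so $H_L$ and $f\cdot I$ commute, and the exponential of the sum factors cleanly:
\[
e^{-i\hbar (H_L + f\cdot I) t_i} = e^{-i\hbar H_L t_i}\, e^{-i\hbar f t_i I} = e^{-i\hbar f t_i}\, e^{-i\hbar H_L t_i},
\]
where the last equality uses $e^{cI} = e^{c} I$ for any scalar $c$, so that $e^{-i\hbar f t_i I}$ acts as multiplication by the scalar $e^{-i\hbar f t_i}$. I would then pull this scalar out of the inner product to obtain
\[
\bra{\phi_i} e^{-i\hbar (H_L + f\cdot I) t_i} \ket{\psi_i} = e^{-i\hbar f t_i}\, \bra{\phi_i} e^{-i\hbar H_L t_i} \ket{\psi_i}.
\]

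Finally, because $f$, $\hbar$, and $t_i$ are all real, the factor $e^{-i\hbar f t_i}$ lies on the unit circle, so $|e^{-i\hbar f t_i}| = 1$ and taking the modulus squared gives
\[
\bigl|\bra{\phi_i} e^{-i\hbar (H_L + f\cdot I) t_i} \ket{\psi_i}\bigr|^2 = \bigl|\bra{\phi_i} e^{-i\hbar H_L t_i} \ket{\psi_i}\bigr|^2.
\]
Consequently each summand $1 - |\cdots|^2$ of $C(H_L + f\cdot I)$ matches the corresponding summand of $C(H_L)$, and summing over $i = 1,\dots,m$ yields $C(H_L + f\cdot I) = C(H_L)$.

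There is essentially no serious obstacle here; the one step that genuinely requires the hypothesis is the factorization of the matrix exponential, which is valid precisely because $f\cdot I$ is a scalar multiple of the identity and hence commutes with $H_L$. Were the added perturbation an arbitrary matrix rather than $f\cdot I$, the Baker--Campbell--Hausdorff correction terms would obstruct this clean splitting, and the phase would no longer factor out as a single unit-modulus scalar. The commutativity of $f\cdot I$ is exactly what makes the argument work.
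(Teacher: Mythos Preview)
Your proof is correct and rests on the same essential observation as the paper's: the scalar matrix $f\cdot I$ commutes with $H_L$, so the exponential of the sum splits as a product, and the extra factor is a unit-modulus scalar that disappears under $|\cdot|^2$.

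The paper takes a more circuitous route: it first diagonalizes $H = PAP^{-1}$, rewrites $e^{-iH} = Pe^{-iA}P^{-1}$, and then carries out essentially your commutativity argument in the diagonal basis, expanding $|\bra{x}Pe^{-i(A-g)}P^{-1}\ket{\psi}|^2$ explicitly as a product with its complex conjugate and tracking the cancellation of $e^{ig}$ against $e^{-ig}$. Your argument bypasses the diagonalization entirely, which is not needed for this statement; you use $e^{A+B}=e^Ae^B$ directly from $[A,B]=0$ rather than passing through an eigenbasis. This is more economical and isolates the single relevant fact (commutativity of $f\cdot I$) without extraneous structure. The paper's detour through diagonalization is likely a carry-over from the computational discussion elsewhere in the paper, where diagonalizing the Hamiltonian is used to speed up the matrix exponential, but for the purpose of this theorem your direct approach is cleaner.
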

\begin{proof}
Let $H$ be a complex-valued hermitian matrix. Since $H$ is hermitian it
can be written in the form
\[
H=PAP^{-1}
\]
 for some nilpotent $P$ and diagonal $A=(a_i)$. Now consider the exponential
\[
e^{-iHt}
\]
 where $-i$ are scalar matrices and $t$ is real. In this case $t$ can be
absorbed into $H$ with it still being hermitian. So we have
\[
e^{-iHt}=e^{-iH}=e^{-i(PAP^{-1})}.
\]
Expand the exponential function, then we obtain
\begin{align*}
e^{-i(PAP^{-1})}&=\sum_{n=0}^\infty \frac{1}{n!}(-iPAP^{-1})^{n} 
=\sum_{n=0}^\infty \frac{1}{n!}(-i)^n PA^n P^{-1} \\
&=P \Big(\sum_{n=0}^\infty \frac{1}{n!}(-i)^n A^n \Big) P^{-1}
=Pe^{-iA} P^{-1}.
\end{align*}
Since $-iA$ is diagonal,
\[
\exp\Bigg(
\begin{bmatrix}
a_1 &  &  & \\
 & a_2 &  & \\
 &  & \ddots & \\
 &  &  & a_n
\end{bmatrix} \Bigg)=
\begin{bmatrix}
e^{a_1} &  &  & \\
 & e^{a_2} &  & \\
 &  & \ddots & \\
 &  &  & e^{a_n}
\end{bmatrix}
\]

Now consider an arbitrary scalar matrix
\[
g=
\begin{bmatrix}
f &  &  & \\
 & f &  & \\
 &  & \ddots & \\
 &  &  & f
\end{bmatrix},
\]
where $f\in\reR$. Since g commutes with any other matrix
\[
e^{g+A}=e^{g}e^{A}.
\]
In the context of our time evolution, this means that
\[
Pe^{-iA}P^{-1}=e^{-iH}=Pe^{-i(g+(A-g))}P^{-1}=Pe^{-ig}e^{-i(A-g)}P^{-1}.
\]
Now consider the probability distribution after a measurement of
this state:
\begin{align*}
|\bra{x}Pe^{-iA}P^{-1}\ket{\psi}|^{2}&=\bra{x}Pe^{-iA}P^{-1}\ket{\psi} 
\overline{\bra{x}Pe^{-iA}P^{-1}\ket{\psi}} \\
&=\bra{x}Pe^{-iA}P^{-1}\ket{\psi} \bra{\psi}Pe^{iA}P^{-1}\ket{x}.
\end{align*}
Since $e^{g}$ is a scalar matrix, it commutes with all the
other matrices. Then, we have
\begin{align*}
&|\bra{x}Pe^{-i(A-g)}P^{-1}\ket{\psi}|^{2} 
= \bra{x}Pe^{ig}e^{-iA}P^{-1}\ket{\psi} \bra{\psi}Pe^{-ig}e^{iA}P^{-1}\ket{x} \\
&=\bra{x}Pe^{-iA}P^{-1} \,\underbrace{(e^{ig}\ket{\psi} \bra{\psi}e^{-ig})}_{=\ket{\psi} \bra{\psi}}\,
Pe^{iA}P^{-1}\ket{x} \\
&=\bra{x}Pe^{-iA}P^{-1}\ket{\psi} \bra{\psi}Pe^{iA}P^{-1}\ket{x} = |\bra{x}Pe^{-iA}P^{-1}\ket{\psi}|^{2}.
\end{align*}
 Therefore the two probability distributions
\[
|\bra{x}Pe^{-i(A-g)}P^{-1}\ket{\psi}|^{2} = |\bra{x}Pe^{-iA}P^{-1}\ket{\psi}|^{2}
\]
are identical. In other words, there are infinitely many Hamiltonians that can minimize the cost function. 
\end{proof}


\begin{centering}
\subsection*{Acknowledgements}
\end{centering}
The authors, Jordan Burns and David Maughan, would like to thank professor Charles Torre and professor Andreas Malmendier for allowing and encouraging what began as a senior thesis to grow into a full fledged research project. Their mentorship has been invaluable in the production of this manuscript. 




\begin{bibdiv}
\begin{biblist}

\bibselect{ref_cs}

\end{biblist}
\end{bibdiv}


\end{document}